\numberwithin{equation}{section}
\newtheorem{theorem}{Theorem}[section]
\newtheorem{proposition}[theorem]{Proposition}
\newtheorem{lemma}[theorem]{Lemma}
\theoremstyle{definition}
\newtheorem{definition}[theorem]{Definition}
\theoremstyle{remark}
\newtheorem*{remark}{Remark}
\newtheorem{example}{Example}
\begin{document}
\newcommand{\M}{\mathcal{M}}
\newcommand{\F}{\mathcal{F}}

\newcommand{\Teich}{\mathcal{T}_{g,N+1}^{(1)}}
\newcommand{\T}{\mathrm{T}}
\newcommand{\corr}{\bf}
\newcommand{\vac}{|0\rangle}
\newcommand{\Ga}{\Gamma}
\newcommand{\new}{\bf}
\newcommand{\define}{\def}
\newcommand{\redefine}{\def}
\newcommand{\Cal}[1]{\mathcal{#1}}
\renewcommand{\frak}[1]{\mathfrak{{#1}}}
\newcommand{\Hom}{\rm{Hom}\,}
\newcommand{\refE}[1]{(\ref{E:#1})}
\newcommand{\refCh}[1]{Chapter~\ref{Ch:#1}}
\newcommand{\refS}[1]{Section~\ref{S:#1}}
\newcommand{\refSS}[1]{Section~\ref{SS:#1}}
\newcommand{\refT}[1]{Theorem~\ref{T:#1}}
\newcommand{\refO}[1]{Observation~\ref{O:#1}}
\newcommand{\refP}[1]{Proposition~\ref{P:#1}}
\newcommand{\refD}[1]{Definition~\ref{D:#1}}
\newcommand{\refC}[1]{Corollary~\ref{C:#1}}
\newcommand{\refL}[1]{Lemma~\ref{L:#1}}
\newcommand{\refEx}[1]{Example~\ref{Ex:#1}}
\newcommand{\R}{\ensuremath{\mathbb{R}}}
\newcommand{\C}{\ensuremath{\mathbb{C}}}
\newcommand{\N}{\ensuremath{\mathbb{N}}}
\newcommand{\Q}{\ensuremath{\mathbb{Q}}}
\renewcommand{\P}{\ensuremath{\mathcal{P}}}
\newcommand{\Z}{\ensuremath{\mathbb{Z}}}
\newcommand{\kv}{{k^{\vee}}}
\renewcommand{\l}{\lambda}
\newcommand{\gb}{\overline{\mathfrak{g}}}
\newcommand{\dt}{\tilde d}     
\newcommand{\hb}{\overline{\mathfrak{h}}}
\newcommand{\g}{\mathfrak{g}}
\newcommand{\h}{\mathfrak{h}}
\newcommand{\gh}{\widehat{\mathfrak{g}}}
\newcommand{\ghN}{\widehat{\mathfrak{g}_{(N)}}}
\newcommand{\gbN}{\overline{\mathfrak{g}_{(N)}}}
\newcommand{\tr}{\mathrm{tr}}
\newcommand{\gln}{\mathfrak{gl}(n)}
\newcommand{\son}{\mathfrak{so}(n)}
\newcommand{\spnn}{\mathfrak{sp}(2n)}
\newcommand{\sln}{\mathfrak{sl}}
\newcommand{\sn}{\mathfrak{s}}
\newcommand{\so}{\mathfrak{so}}
\newcommand{\spn}{\mathfrak{sp}}
\newcommand{\tsp}{\mathfrak{tsp}(2n)}
\newcommand{\gl}{\mathfrak{gl}}
\newcommand{\slnb}{{\overline{\mathfrak{sl}}}}
\newcommand{\snb}{{\overline{\mathfrak{s}}}}
\newcommand{\sob}{{\overline{\mathfrak{so}}}}
\newcommand{\spnb}{{\overline{\mathfrak{sp}}}}
\newcommand{\glb}{{\overline{\mathfrak{gl}}}}
\newcommand{\Hwft}{\mathcal{H}_{F,\tau}}
\newcommand{\Hwftm}{\mathcal{H}_{F,\tau}^{(m)}}

\newcommand{\car}{{\mathfrak{h}}}    
\newcommand{\bor}{{\mathfrak{b}}}    
\newcommand{\nil}{{\mathfrak{n}}}    
\newcommand{\vp}{{\varphi}}
\newcommand{\bh}{\widehat{\mathfrak{b}}}  
\newcommand{\bb}{\overline{\mathfrak{b}}}  
\newcommand{\Vh}{\widehat{\mathcal V}}
\newcommand{\KZ}{Kniz\-hnik-Zamo\-lod\-chi\-kov}
\newcommand{\TUY}{Tsuchia, Ueno  and Yamada}
\newcommand{\KN} {Kri\-che\-ver-Novi\-kov}
\newcommand{\pN}{\ensuremath{(P_1,P_2,\ldots,P_N)}}
\newcommand{\xN}{\ensuremath{(\xi_1,\xi_2,\ldots,\xi_N)}}
\newcommand{\lN}{\ensuremath{(\lambda_1,\lambda_2,\ldots,\lambda_N)}}
\newcommand{\iN}{\ensuremath{1,\ldots, N}}
\newcommand{\iNf}{\ensuremath{1,\ldots, N,\infty}}

\newcommand{\tb}{\tilde \beta}
\newcommand{\tk}{\tilde \varkappa}
\newcommand{\ka}{\kappa}
\renewcommand{\k}{\varkappa}
\newcommand{\ce}{{c}}

\newcommand{\Pif} {P_{\infty}}
\newcommand{\Pinf} {P_{\infty}}
\newcommand{\PN}{\ensuremath{\{P_1,P_2,\ldots,P_N\}}}
\newcommand{\PNi}{\ensuremath{\{P_1,P_2,\ldots,P_N,P_\infty\}}}
\newcommand{\Fln}[1][n]{F_{#1}^\lambda}
\newcommand{\tang}{\mathrm{T}}
\newcommand{\Kl}[1][\lambda]{\can^{#1}}
\newcommand{\A}{\mathcal{A}}
\newcommand{\U}{\mathcal{U}}
\newcommand{\V}{\mathcal{V}}
\newcommand{\W}{\mathcal{W}}
\renewcommand{\O}{\mathcal{O}}
\newcommand{\Ae}{\widehat{\mathcal{A}}}
\newcommand{\Ah}{\widehat{\mathcal{A}}}
\newcommand{\La}{\mathcal{L}}
\newcommand{\Le}{\widehat{\mathcal{L}}}
\newcommand{\Lh}{\widehat{\mathcal{L}}}
\newcommand{\eh}{\widehat{e}}
\newcommand{\Da}{\mathcal{D}}
\newcommand{\kndual}[2]{\langle #1,#2\rangle}
\newcommand{\cins}{\frac 1{2\pi\mathrm{i}}\int_{C_S}}
\newcommand{\cinsl}{\frac 1{24\pi\mathrm{i}}\int_{C_S}}
\newcommand{\cinc}[1]{\frac 1{2\pi\mathrm{i}}\int_{#1}}
\newcommand{\cintl}[1]{\frac 1{24\pi\mathrm{i}}\int_{#1 }}
\newcommand{\w}{\omega}
\newcommand{\ord}{\operatorname{ord}}
\newcommand{\res}{\operatorname{res}}
\newcommand{\nord}[1]{:\mkern-5mu{#1}\mkern-5mu:}
\newcommand{\codim}{\operatorname{codim}}
\newcommand{\ad}{\operatorname{ad}}
\newcommand{\Ad}{\operatorname{Ad}}
\newcommand{\supp}{\operatorname{supp}}

\newcommand{\Fn}[1][\lambda]{\mathcal{F}^{#1}}
\newcommand{\Fl}[1][\lambda]{\mathcal{F}^{#1}}
\renewcommand{\Re}{\mathrm{Re}}

\newcommand{\ha}{H^\alpha}

\define\ldot{\hskip 1pt.\hskip 1pt}
\define\ifft{\qquad\text{if and only if}\qquad}
\define\a{\alpha}
\redefine\d{\delta}
\define\w{\omega}
\define\ep{\epsilon}
\redefine\b{\beta} \redefine\t{\tau} \redefine\i{{\,\mathrm{i}}\,}
\define\ga{\gamma}
\define\cint #1{\frac 1{2\pi\i}\int_{C_{#1}}}
\define\cintta{\frac 1{2\pi\i}\int_{C_{\tau}}}
\define\cintt{\frac 1{2\pi\i}\oint_{C}}
\define\cinttp{\frac 1{2\pi\i}\int_{C_{\tau'}}}
\define\cinto{\frac 1{2\pi\i}\int_{C_{0}}}
\define\cinttt{\frac 1{24\pi\i}\int_C}
\define\cintd{\frac 1{(2\pi \i)^2}\iint\limits_{C_{\tau}\,C_{\tau'}}}
\define\dintd{\frac 1{(2\pi \i)^2}\iint\limits_{C\,C'}}
\define\cintdr{\frac 1{(2\pi \i)^3}\int_{C_{\tau}}\int_{C_{\tau'}}
\int_{C_{\tau''}}}
\define\im{\operatorname{Im}}
\define\re{\operatorname{Re}}
\define\res{\operatorname{res}}
\redefine\deg{\operatornamewithlimits{deg}}
\define\ord{\operatorname{ord}}
\define\rank{\operatorname{rank}}
\define\fpz{\frac {d }{dz}}
\define\dzl{\,{dz}^\l}
\define\pfz#1{\frac {d#1}{dz}}

\define\K{\Cal K}
\define\U{\Cal U}
\redefine\O{\Cal O}
\define\He{\text{\rm H}^1}
\redefine\H{{\mathrm{H}}}
\define\Ho{\text{\rm H}^0}
\define\A{\Cal A}
\define\Do{\Cal D^{1}}
\define\Dh{\widehat{\mathcal{D}}^{1}}
\redefine\L{\Cal L}
\newcommand{\ND}{\ensuremath{\mathcal{N}^D}}
\redefine\D{\Cal D^{1}}
\define\KN {Kri\-che\-ver-Novi\-kov}
\define\Pif {{P_{\infty}}}
\define\Uif {{U_{\infty}}}
\define\Uifs {{U_{\infty}^*}}
\define\KM {Kac-Moody}
\define\Fln{\Cal F^\lambda_n}
\define\gb{\overline{\mathfrak{ g}}}
\define\G{\overline{\mathfrak{ g}}}
\define\Gb{\overline{\mathfrak{ g}}}
\redefine\g{\mathfrak{ g}}
\define\Gh{\widehat{\mathfrak{ g}}}
\define\gh{\widehat{\mathfrak{ g}}}
\define\Ah{\widehat{\Cal A}}
\define\Lh{\widehat{\Cal L}}
\define\Ugh{\Cal U(\Gh)}
\define\Xh{\hat X}
\define\Tld{...}
\define\iN{i=1,\ldots,N}
\define\iNi{i=1,\ldots,N,\infty}
\define\pN{p=1,\ldots,N}
\define\pNi{p=1,\ldots,N,\infty}
\define\de{\delta}

\define\kndual#1#2{\langle #1,#2\rangle}
\define \nord #1{:\mkern-5mu{#1}\mkern-5mu:}
\newcommand{\MgN}{\mathcal{M}_{g,N}} 
\newcommand{\MgNeki}{\mathcal{M}_{g,N+1}^{(k,\infty)}} 
\newcommand{\MgNeei}{\mathcal{M}_{g,N+1}^{(1,\infty)}} 
\newcommand{\MgNekp}{\mathcal{M}_{g,N+1}^{(k,p)}} 
\newcommand{\MgNkp}{\mathcal{M}_{g,N}^{(k,p)}} 
\newcommand{\MgNk}{\mathcal{M}_{g,N}^{(k)}} 
\newcommand{\MgNekpp}{\mathcal{M}_{g,N+1}^{(k,p')}} 
\newcommand{\MgNekkpp}{\mathcal{M}_{g,N+1}^{(k',p')}} 
\newcommand{\MgNezp}{\mathcal{M}_{g,N+1}^{(0,p)}} 
\newcommand{\MgNeep}{\mathcal{M}_{g,N+1}^{(1,p)}} 
\newcommand{\MgNeee}{\mathcal{M}_{g,N+1}^{(1,1)}} 
\newcommand{\MgNeez}{\mathcal{M}_{g,N+1}^{(1,0)}} 
\newcommand{\MgNezz}{\mathcal{M}_{g,N+1}^{(0,0)}} 
\newcommand{\MgNi}{\mathcal{M}_{g,N}^{\infty}} 
\newcommand{\MgNe}{\mathcal{M}_{g,N+1}} 
\newcommand{\MgNep}{\mathcal{M}_{g,N+1}^{(1)}} 
\newcommand{\MgNp}{\mathcal{M}_{g,N}^{(1)}} 
\newcommand{\Mgep}{\mathcal{M}_{g,1}^{(p)}} 
\newcommand{\MegN}{\mathcal{M}_{g,N+1}^{(1)}} 

\define \sinf{{\widehat{\sigma}}_\infty}
\define\Wt{\widetilde{W}}
\define\St{\widetilde{S}}
\newcommand{\SigmaT}{\widetilde{\Sigma}}
\newcommand{\hT}{\widetilde{\frak h}}
\define\Wn{W^{(1)}}
\define\Wtn{\widetilde{W}^{(1)}}
\define\btn{\tilde b^{(1)}}
\define\bt{\tilde b}
\define\bn{b^{(1)}}
\define \ainf{{\frak a}_\infty} 

%
\define\eps{\varepsilon}    
\newcommand{\e}{\varepsilon}
\define\doint{({\frac 1{2\pi\i}})^2\oint\limits _{C_0}
       \oint\limits _{C_0}}                            
\define\noint{ {\frac 1{2\pi\i}} \oint}   
\define \fh{{\frak h}}     
\define \fg{{\frak g}}     
\define \GKN{{\Cal G}}   
\define \gaff{{\hat\frak g}}   
\define\V{\Cal V}
\define \ms{{\Cal M}_{g,N}} 
\define \mse{{\Cal M}_{g,N+1}} 
\define \tOmega{\Tilde\Omega}
\define \tw{\Tilde\omega}
\define \hw{\hat\omega}
\define \s{\sigma}
\define \car{{\frak h}}    
\define \bor{{\frak b}}    
\define \nil{{\frak n}}    
\define \vp{{\varphi}}
\define\bh{\widehat{\frak b}}  
\define\bb{\overline{\frak b}}  
\define\KZ{Knizhnik-Zamolodchikov}
\define\ai{{\alpha(i)}}
\define\ak{{\alpha(k)}}
\define\aj{{\alpha(j)}}
\newcommand{\calF}{{\mathcal F}}
\newcommand{\ferm}{{\mathcal F}^{\infty /2}}
\newcommand{\Aut}{\operatorname{Aut}}
\newcommand{\End}{\operatorname{End}}
\newcommand{\laxgl}{\overline{\mathfrak{gl}}}
\newcommand{\laxsl}{\overline{\mathfrak{sl}}}
\newcommand{\laxso}{\overline{\mathfrak{so}}}
\newcommand{\laxsp}{\overline{\mathfrak{sp}}}
\newcommand{\laxs}{\overline{\mathfrak{s}}}
\newcommand{\laxg}{\overline{\frak g}}
\newcommand{\bgl}{\laxgl(n)}
\newcommand{\tX}{\widetilde{X}}
\newcommand{\tY}{\widetilde{Y}}
\newcommand{\tZ}{\widetilde{Z}}


\title[]{Spectral curves of the hyperelliptic Hitchin systems}
\author[O.K.Sheinman]{O.K.Sheinman}
\maketitle
\begin{abstract}
A description of the class of spectral curves, and explicit formulas for algebraic-geometric action-angle coordinates are obtained for the Hitchin systems on hyperelliptic curves, for any complex simple Lie algebra of the types $A_l$, $B_l$, $C_l$.
\end{abstract}
\tableofcontents
\section{Introduction}
The aim of this paper is a further effectivization of the separation of variables scheme for Hitchin systems, based on their Lax representation originally proposed in \cite{Kr_Lax}, and further developed in \cite{Sh_DGr,Shein_UMN2016} (see also references therein).

In the present work we pursue nearly the same objective that are pursued in \cite{Gaw,Previato}. In particular, in \cite{Gaw} the formulae for the action--angle coordinates, and $\theta$-formulae for solutions are obtained for the rank 2 genus 2 Hitchin systems.

The geometry of separation of variables for Hitchin systems has been studied in \cite{Hur,GNR}. Let $\M_{r,d}$ be the moduli space of holomorphic rank $r$ degree $d$ vector bundles on  a Riemann surface $\Sigma$, $h=\dim \M_{r,d}$. Then the following holds:
\begin{theorem}[\cite{GNR}]\label{T:GNR}
There exist a birational map
\[
   \varphi :\, T^*\M_{r,d}\to (T^*\Sigma)^{[h]}
\]
which is a symplectomorphism on open dense subsets.
\end{theorem}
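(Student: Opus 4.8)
The plan is to realize $T^*\M_{r,d}$, on an open dense subset, as the moduli space of Higgs pairs $(E,\Phi)$ with $E$ a stable bundle of rank $r$ and degree $d$ and $\Phi\in H^0(\Sigma,\operatorname{End}(E)\otimes K)$ a Higgs field ($K$ the canonical bundle), and to build $\varphi$ through the spectral (Beauville--Narasimhan--Ramanan) correspondence. Identifying $T^*\Sigma$ with the total space of $K$, the characteristic polynomial $\det(\lambda-\Phi)=\lambda^r+a_1\lambda^{r-1}+\dots+a_r$, with $a_i\in H^0(\Sigma,K^i)$, cuts out the spectral curve $S\subset T^*\Sigma$, an $r$-sheeted cover $\pi:S\to\Sigma$, and $(E,\Phi)$ is recovered from $S$ together with the eigenline bundle $L$ on $S$ via $E=\pi_*L$. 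The target $(T^*\Sigma)^{[h]}$ carries the holomorphic symplectic form induced from the canonical (Liouville) form $d\lambda\wedge dz$ of the surface $T^*\Sigma$, and on its generic locus its points are unordered $h$-tuples of distinct points of $T^*\Sigma$.

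Second, the dimension bookkeeping fixes everything. For $g\ge 2$ one has $h=\dim\M_{r,d}=r^2(g-1)+1$, the Hitchin base $B=\bigoplus_{i=1}^r H^0(\Sigma,K^i)$ satisfies $\dim B=h$, and Riemann--Hurwitz for $\pi$ gives $g(S)=r^2(g-1)+1=h$ for a smooth $S$. Hence a generic line bundle of degree $g(S)$ on $S$ has $h^0=1$ by Riemann--Roch, so it possesses a unique effective divisor of degree $h$. I would define $\varphi$ by sending $(E,\Phi)\leftrightarrow(S,L)$ to this divisor $D$: after twisting $L$ by a fixed reference bundle so that the degree becomes $g(S)$ — equivalently, taking the divisor of apparent singularities of a normalized eigenvector of $\Phi$, in the Sklyanin style — $D$ is $h$ points lying on $S\subset T^*\Sigma$, i.e. a point of $(T^*\Sigma)^{[h]}$.

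Third, birationality is seen by writing the inverse on the generic locus. A point $(p,\lambda)\in T^*\Sigma$ imposes one affine-linear condition $\lambda^r+a_1(p)\lambda^{r-1}+\dots+a_r(p)=0$ on $(a_1,\dots,a_r)\in B$; since $\dim B=h$, a generic unordered $h$-tuple of points imposes $h$ independent conditions and so determines a unique spectral curve $S$ passing through them. The $h$ points then form an effective divisor $D$ on $S$, hence a line bundle $\mathcal O_S(D)$ and, after untwisting, the eigenline bundle $L$; applying $E=\pi_*L$ recovers $(E,\Phi)$. Thus $\varphi$ is a birational isomorphism onto its image, defined wherever $E$ is stable, $S$ is smooth, $L$ is generic, and the $h$ points are distinct and unramified over $\Sigma$, so that the Hilbert scheme and the symmetric product agree.

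Finally — and this is the main obstacle — one must check that $\varphi$ is a symplectomorphism, i.e. that it pulls back $\sum_i d\lambda_i\wedge dz_i$ to the canonical $2$-form of $T^*\M_{r,d}$. The clean way is to compare tautological $1$-forms and show that $\varphi^*\big(\sum_i\lambda_i\,dz_i\big)$ differs from the canonical $1$-form of $T^*\M_{r,d}$ by an exact form. At $(E,\Phi)$ the canonical form pairs a deformation $\delta E\in H^1(\Sigma,\operatorname{End}E)$ with $\Phi$ by Serre duality, i.e. by a residue pairing on $\Sigma$; transporting this through the spectral correspondence turns the residue pairing on $\Sigma$ into a sum of residues over the sheets of $S$, which is exactly the Liouville contribution $\sum_i\lambda_i\,dz_i$ of the separation points. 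Making this precise is the substance of the statement and is where I expect the real work to lie; it is the residue computation carried out for spectral data in \cite{Hur,GNR}, and I would follow that calculation to conclude that $\varphi$ is a symplectomorphism on the open dense locus described above.
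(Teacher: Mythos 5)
The paper does not prove this statement at all: it is quoted verbatim from \cite{GNR} as background, so there is no internal proof to compare yours against. On its own terms, your outline is the standard argument and is consistent with the one in \cite{GNR} and \cite{Hur}: the identification of $T^*\M_{r,d}$ with Higgs pairs, the spectral correspondence $(E,\Phi)\leftrightarrow(S,L)$, and the dimension counts $h=\dim\M_{r,d}=r^2(g-1)+1=\dim B=g(S)$ are all correct, and the observation that the equation of the spectral curve is \emph{affine}-linear in the coefficients $(a_1,\dots,a_r)$ is exactly what makes $h$ generic points of $T^*\Sigma$ determine a unique spectral curve, giving birationality. The one place where your write-up stops short of a proof is the symplectomorphism claim itself: you correctly identify that everything reduces to showing $\varphi^*\bigl(\sum_i\lambda_i\,dz_i\bigr)$ agrees with the tautological $1$-form of $T^*\M_{r,d}$ up to an exact term via a residue computation through the spectral correspondence, but you defer that computation to \cite{Hur,GNR} rather than carrying it out. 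Since that residue identity is the actual content of the theorem (the rest is bookkeeping), your proposal should be read as a correct reduction of the statement to the key lemma of the cited sources rather than a self-contained proof; given that the paper itself treats the whole theorem as a citation, this is an acceptable level of detail, but be aware that the "main obstacle" you flag is precisely the part you have not verified.
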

\noindent Here $(T^*\Sigma)^{[h]}$ denotes the Hilbert scheme of $h$ points on $T^*\Sigma$ whose open dense subset coincides with the $h$th symmetric power of $T^*\Sigma$ with cut out diagonals.

The algebra of separation of variables, at least its part we need here, is presented in \cite{BT2,DT}. From a more general point of view it goes back to \cite{BT1,DKN,Skl}.

The relation between the separation of variables and the algebro-geometric version of the inverse scattering method for the Hitchin systems of type ${\rm A}_n$ (and for a certain wider class of systems), apart of many other important results, is established in \cite{Kr_Lax}. It is shown there that the canonical variables of the symplectic structure on the right hand side of the symplectimorphism in \refT{GNR} are nothing but the poles of the Baker--Akhiezer function, and their conjugated variables.

As it is shown in \cite{Kr_Lax}, coefficients of spectral curves of Hitchin systems form spaces of meromorphic functions on $\Sigma$ whose divisors are multiples of a certain canonical divisor. After that, in accordance with definition of the Hitchin systems, in order to obtain the Hamiltonians it only remains to choose appropriate bases in those spaces. This is what we begin with in this paper. As the result, we obtain an explicit description of the class of spectral curves of Hitchin systems on hyperelliptic curves of arbitrary genera, and for arbitrary complex semisimple Lie algebras (\refP{f_base} of \refSS{dx/y}). Before, such description was known for Calogero--Moser systems \cite{DW,PD'Hock} (it was obtained by a different technique).

We proceed then with the description of the holomorphic differentials on spectral curves (\refP{h_diff}). All together enables us to carry out the scheme of separation of variables and explicitly write down the algebraic-geometric action-angle coordinates. To give a taste of the answer, we come up with a simplest example here. For the rank~2 Hitchin system on a genus~2 hyperelliptic curve $y^2=P_5(x)(=x^5+\ldots)\ $, the spectral curve is a full intersection of the surfaces $\l^2=H_0+H_1x+H_2x^2$ and $y^2=P_5(x)$, the phase space is given by the triples $\ga_i=(x_i,y_i,z_i)$ ($i=1,2,3$) (where $x_i$, $y_i$, $z_i$ satisfy the above relations) with the symplectic form $d\l_1\wedge\frac{dx_1}{y_1}+d\l_2\wedge\frac{dx_2}{y_2}+d\l_3\wedge\frac{dx_3}{y_3}$, the action coordinates coincide with $H_0$, $H_1$, $H_2$, they form a solution to the system of linear equations
\begin{equation}\label{E:action_rk(2)}
     \l_i^2=H_0+H_1x_i+H_2x_i^2\ (i=1,2,3),
\end{equation}
and the angle coordinates are given by the relations
\begin{equation}\label{E:angle_rk(2)}
     \phi_k=\sum_{i=1}^3 \int^{\ga_i}\frac{x^kdx}{\l y}\  (k=0,1,2).
\end{equation}

We derive these results from the analitic properties of the Lax operator for Hitchin systems in the Tyurin parametrization given in \cite{Kr_Lax,Sh_DGr,Shein_UMN2016}. Observe that there is no explicit expression for it. However, the knowledge of the analitic properties is sufficient to obtain the above formulated explicit results. Observe also that the important explicit Lax representation for the Calogero--Moser systems \cite{Kr_FAN_78}, its generalizations in \cite{PD'Hock} and in \cite{Sh_DGr}, and also the example on a hyperelliptic curve \cite{Kr_Lax}, indeed are the Lax representations for Hitchin systems with additional marked points.

Many results of the present paper hold for an arbitrary complex semisimple Lie algebra $\g$, in particular, for $\g$ of type $D_l$ or $G_2$ but there are certain peculiarities, see our remarks on $D_l$ in Sections \ref{SS:num_int},  \ref{SS:dx/y}.

I am grateful to S.P.Novikov who has drown my attention to the problem, and to D.V.Talalaev for his interest and discussions.

\section{Description of spectral curves}\label{S:descr}
In this section we propose a description of the class of spectral curves of Hitchin systems on hyperelliptic curves, with arbitrary semisimple Lie algebras.
\subsection{Lax operator of a Hitchin system}\label{SS:Lax}
We begin with a general definition of Lax operators with a spectral parameter on a Riemann surface following \cite{Shein_UMN2016} (the theory of the corresponding integrable systems was pioneered by I.Krichever in \cite{Kr_Lax}, see \refEx{Ex1} below).

Let $\g$ be a semi-simple Lie algebra over $\C$, $\h$ be its Cartan subalgebra, and $h\in\h$ be such element that  $p_i=\a_i(h)\in\Z_+$ for every simple root $\a_i$ of $\g$. Let $\g_p=\{ X\in\g\ |\ (\ad h)X=pX \}$, and $k=\max\{p\ |\ \g_p\ne 0\}$. Then the decomposition $\g=\bigoplus\limits_{i=-k}^{k}\g_p$ gives a $\Z$-grading on $\g$. We call $k$ a \emph{depth} of the grading. Obviously, $\g_p=\bigoplus\limits_{\substack{\a\in R\\ \a(h)=p}}\g_\a$ where $R$ is the root system of $\g$. Define also the following filtration on $\g$:  $\tilde\g_p=\bigoplus\limits_{q=-k}^p\g_q$. Then $\tilde\g_p\subset\tilde\g_{p+1}$ ($p\ge -k$), $\tilde\g_{-k}=\g_{-k},\ldots,\tilde\g_k=\g$, $\tilde\g_p=\g$, $p>k$.

Let $\Sigma$ be a complex compact Riemann surface with two given finite sets of marked points: $\Pi$ and $\Gamma$. We fix a non-negative divisor $D$ on $\Sigma$ supported on $\Pi$.
\begin{definition}\label{D:Lax_gen}
By \emph{Lax operator} $L$ we mean a meromorphic mapping $\Sigma\to\g$  such that $(L)+k\sum_{\ga\in\Gamma}\ga+D\ge 0$, and $L$ has the Laurent decomposition of the following form at the points in $\Gamma$:
\begin{equation}\label{E:ga_expan}
   L(z)=\sum\limits_{p=-k}^\infty L_pz^p,\ L_p\in\tilde\g_p
\end{equation}
where $z$ is a local coordinate in the neighborhood of a $\ga\in\Gamma$.
\end{definition}
In general, the grading element $h$ may vary from one to another point of $\Gamma$ ($h=h_\ga$). For simplicity, we assume that $k$ is the same all over $\Gamma$, though it would be no difference otherwise.

In what follows we regard to $\ga\in\Gamma$ and to $\{ h_\ga\ |\ga\in\Gamma\}$ as to the dynamical variables of the corresponding integrable system.
\begin{example}\label{Ex:Ex1}
For $\g=\sln(n)$ an equivalent definition of the Lax operator (indeed, it was the original definition \cite{Kr_Lax}) is obtained by replacing the relation \refE{ga_expan} with the following:
\begin{equation}
 L(z)=\frac{\a_\ga\b_\ga^t}{z}+L_{0,\ga}+O(z)
\end{equation}
where $\a_\ga,\b_\ga\in\C^n$, $\b^t\a=0$ and there exist $\k_\ga\in\C$ such that $L_{0,\ga}\a_\ga=\k_\ga\a_\ga$  for every $\ga\in\Gamma$.  The parameters $\ga$, $\a_\ga$ ($\ga\in\Gamma$) are called \emph{Tyurin parameters}. They, and dual $\k_\ga$, $\b_\ga$ ($\ga\in\Gamma$) are dynamical variables of the system. For the proof of equivalence of the two above definitions, and for the Tyurin parametrization in the case $\g$ is a classical complex Lie algebra, or $\g=G_2$, we refer to \cite{Shein_UMN2016,Sh_DGr}.
\end{example}
In order to obtain the Lax operator of a Hitchin system, we must set $D$ to be equal to the divisor of a holomorphic differential \cite{Kr_Lax}. The last is denoted by $\varpi$ below: $D=(\varpi)$. In particular, $D\in\K$ where $\K$ is the canonical class.

From now on $\Sigma$ is a hyperelliptic curve given by
\begin{equation}\label{E:curve}
   y^2=x^{2g+1}+\sum_{i=0}^{2g} a_ix^i\, (=P_{2g+1}(x)).
\end{equation}
Up to the end of \refS{separ} we fix $D=2(g-1)\cdot\infty$ which is the divisor of the holomorphic differential $\varpi=\frac{dx}{y}$.

\subsection{Basis spectral invariants and their analytic properties}\label{SS:num_int}
By a \emph{basis spectral invariant} we mean a function of the form  $p_i(z)=\chi_i(L(z))$ where $\chi_i$ ($i=1,\ldots,l$), $l=\rank\g$, are basis invariant polynomials of the Lie algebra $\g$, $L$ is the Lax operator. The $d_i=\deg \chi_i$ is refered to as the degree of the basis spectral invariant $p_i$. By definition, the Hamiltonians of the system are obtained as the coefficients of expansions of basis spectral invariants in the linear combination of appropriate basis functions in the space of invariants of a given degree. Such a base is exhibited in \refSS{dx/y}.

For the Lie algebras of classic series $A_n$, $B_n$, $C_n$ the basis spectral invariants coincide with non-vanishing coefficients of the corresponding spectral curve. For the series $D_n$ it is true for all coefficients of the spectral curve except for one of them, namely, except for determinant of $L(z)$ which is equal to the square of the basis invariant given by the Pfaffian.

We summarize the analytic properties of the basis spectral invariants in the following statement.
\begin{proposition}[\cite{Kr_Lax,Shein_UMN2016,Sh_DGr}]
$1^\circ$. The basis (hence, \emph{all}) spectral invariants are holomorphic at the points in $\Gamma$.
\newline
$2^\circ$. The following holds for the basis spectral invariant $p_i$ ($i=1,\ldots,l$):
\[
   (p_i)+d_iD\ge 0.
\]
$3^\circ$. For $D\in\K$ the dimension of the space generated by basis spectral invariants is equal to $\dim\g\cdot(g-1)$.
\end{proposition}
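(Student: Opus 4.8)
The three parts have different flavors, so the plan is to establish them in turn.

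For $1^\circ$ I would exploit the interplay between the filtration condition $L_p\in\tilde\g_p$ in \refE{ga_expan} and the $\Ad$-invariance of the invariant polynomials $\chi_i$. Fix $\ga\in\Gamma$ with local coordinate $z$ and split $L(z)$ into its graded components $L(z)=\sum_{q=-k}^{k}L^{(q)}(z)$ with $L^{(q)}(z)\in\g_q$. Writing $L_p=\sum_{q\le p}L_{p,q}$, $L_{p,q}\in\g_q$, the requirement $L_p\in\tilde\g_p=\bigoplus_{q\le p}\g_q$ means that $L^{(q)}(z)=\sum_{p\ge q}L_{p,q}z^p$ contains only powers $z^p$ with $p\ge q$; hence $z^{-q}L^{(q)}(z)$ is holomorphic at $z=0$. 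Consequently the gauge transform $\tilde L(z):=\Ad(z^{-h})L(z)$, which acts on $\g_q$ by multiplication by $z^{-q}$, is holomorphic at $\ga$. Since each $\chi_i$ is constant on adjoint orbits, $\chi_i(L(z))=\chi_i(\Ad(z^{h})\tilde L(z))=\chi_i(\tilde L(z))$, and the right-hand side is manifestly holomorphic at $\ga$; as every invariant is a polynomial in the $\chi_i$, this proves holomorphy for all spectral invariants. The only delicate point is checking that $\Ad(z^{-h})$ cancels the pole exactly, which is precisely where the filtration $L_p\in\tilde\g_p$ (rather than the grading $L_p\in\g_p$) is used.

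For $2^\circ$ I would argue pointwise. Off $\Gamma$ the term $k\sum_{\ga\in\Gamma}\ga$ in \refD{Lax_gen} is irrelevant, so $(L)+D\ge 0$ there; since $\chi_i$ is homogeneous of degree $d_i$, a pole of $L$ of order $m$ yields a pole of $\chi_i(L)$ of order at most $d_im$, giving $(p_i)+d_iD\ge 0$ at every point of the support of $D$ and trivially at the remaining points off $\Gamma$. Combining with the holomorphy at $\Gamma$ from $1^\circ$ yields $(p_i)+d_iD\ge 0$ globally, i.e. $p_i\in\Ho(\Sigma,\O(d_iD))$.

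For $3^\circ$ the idea is to compute $\dim\Ho(\Sigma,\O(d_iD))$ by Riemann--Roch and sum. With $D\in\K$ one has $\deg(d_iD)=d_i(2g-2)$ and $\K-d_iD\sim(1-d_i)\K$; for hyperelliptic $\Sigma$ we have $g\ge 2$, and since the minimal degree of a basic invariant of a semisimple Lie algebra is $2$ (there are no linear invariants, while the Killing form supplies a quadratic one) every $d_i\ge 2$, so $\deg(\K-d_iD)<0$ and the Serre-dual term vanishes. Riemann--Roch then gives $\dim\Ho(\O(d_iD))=d_i(2g-2)-g+1=(2d_i-1)(g-1)$. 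Writing $d_i=m_i+1$ with $m_i$ the exponents of $\g$ and summing over $i=1,\ldots,l$, I would invoke the classical identity $\sum_{i=1}^l(2d_i-1)=\sum_{i=1}^l(2m_i+1)=l+2\#R_+=l+|R|=\dim\g$, which yields total dimension $(g-1)\dim\g$.

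I expect the genuine obstacle to sit in $3^\circ$: the computation above gives the dimension of the ambient space $\bigoplus_i\Ho(\O(d_iD))$, so to claim that the spectral invariants \emph{generate} a space of exactly this dimension one must also show that, as the dynamical variables range over the phase space, the $p_i$ sweep out these full Riemann--Roch spaces, which is the statement that the Hitchin map attains its expected rank. A secondary issue is the $D_l$ case noted in the statement: there one must replace the determinant by the Pfaffian (of degree $l$), after which the identity $\sum(2d_i-1)=\dim\g$ still holds and the count is unchanged.
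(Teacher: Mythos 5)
Your proposal is correct and follows the same overall route as the paper: holomorphy of the invariants at $\Gamma$, hence $p_i\in H^0(d_iD)$, then a Riemann--Roch count using $d_i\ge 2$ and $\sum_{i=1}^l(2d_i-1)=\dim\g$. The one place you genuinely add something is $1^\circ$: the paper simply cites \cite{Kr_Lax,Shein_UMN2016} for the fact that the poles of $L$ on $\Gamma$ can be removed by conjugation with a local holomorphic group-valued function, whereas you make the mechanism explicit by conjugating with the torus element $z^{-h}$, observing that the filtration condition $L_p\in\tilde\g_p$ is exactly what makes $\Ad(z^{-h})L(z)$ regular, and then using $\Ad$-invariance of $\chi_i$ at each fixed $z\ne 0$. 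Note that $z^{-h}$ itself is only meromorphic as a group-valued map (it scales $\g_q$ by $z^{-q}$), so it is not literally the holomorphic gauge of the cited works, but for the purpose of proving holomorphy of the $\chi_i(L(z))$ your argument is self-contained and arguably cleaner. Your closing caveat on $3^\circ$ is also well taken: the Riemann--Roch computation bounds the span of the invariants by $\dim\bigoplus_i H^0(d_iD)=(g-1)\dim\g$, and the claim that this dimension is attained (i.e.\ that the Hitchin map has full rank) is not proved in the paper either --- it is delegated to the standard argument going back to \cite{Hitchin} and to \cite{Shein_UMN2016}, so flagging it is appropriate rather than a defect of your write-up.
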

\begin{proof}
Here, we briefly outline the proof of the theorem (for the detailed proof we refer to the quoted works).

The first statement is implied by the following fact: those poles of $L$ which belong to $\Gamma$ can be eliminated by a conjugation of $L$ by a certain local holomorphic function taking values in the corresponding group \cite{Kr_Lax}, \cite[лемма 4.2]{Shein_UMN2016}.

By $1^\circ$, the divisor of the function $p_i$ is supported at the points of the divisor $D$, hence the basis invariant $p_i$ ranges over $H^0(d_iD)$; this proves $2^\circ$. Let $h^0(d_iD)$ stay for the dimension of the last space (for the exceptional case of the series $D_l$ this is $p_l^2$ which ranges over $H^0(d_rD)$; it does not affect the dimension).

The dimension of the space of all spectral invariants (coinciding with the number of independent integrals, and also with the dimension of the space of spectral curves) is equal to
\[
   N=\sum_{i=1}^l h^0(d_iD)
\]
Since $\g$ is semi-simple, the degrees of its basis invariant polynomials are not less than 2, hence for $\deg D>g-1$ the divisors $d_iD$ are non-special ($\deg d_iD> 2(g-1)$) which enables one to compute $N$ by means the Riemann--Roch theorem. For $D\in\K$ ($\deg D=2(g-1)$) the Riemann--Roch theorem gives
\[
     h^0(d_iD)=d_i\deg D-g+1=(2d_i-1)(g-1),
\]
hence
\[
   N=\sum_{i=1}^l (2d_i-1)(g-1)=\dim\g\cdot(g-1)
\]
due to the identity $\sum_{i=1}^l (2d_i-1)=\dim\g$ valid for an arbitrary complex semisimple Lie algebra. This calculation became standard since \cite{Hitchin}, see  \cite{Shein_UMN2016} for its version in the present set-up.
\end{proof}
\begin{remark}
Vise verse, for the reductive Lie algebras, for example, for $\gl(n)$, the coefficient at $\l^{n-1}$ (i.e. the trace of $L$, the 1st degree invariant)  does not vanish in general, hence $d_1=1$, and the first summand in the expression for $N$ is exactly $h^0(D)$, but the divisor $D\in\K$ is special.
\end{remark}
\subsection{Dimension of the phase space}\label{SS:phas}
Here we show that the dimension of the phase space is equal to $2N=2\dim\g\cdot(g-1)$. We will do it here for $\g=\sln(n)$ and refer to \cite{Shein_UMN2016} for the remainder of the classical simple Lie algebras, and to \cite{Sh_TMPh} for $\g=G_2$.

Thus we set $\g=\sln(n)$ until the end of this section. Then the local conditions satisfied by $L$ are given in the \refEx{Ex1}. We denote the space of Lax operators with fixed Tyurin parameters and the divisor $D$ by $\L^D_{\a,\Gamma}$ where $\a=\{\a_\ga\ |\ \ga\in\Gamma\}$. The dimension of  $\L^D_{\a,\Gamma}$ is equal to $\dim\g\cdot(\deg D-g+1)$ \cite{Shein_UMN2016}. For $D\in\K$
\[
    \dim \L^\K_{\a,\ga}=\dim\g\cdot(g-1).
\]
To obtain the full dimension of the phase space we must add the number of Tyurin parameters and subtract the gauge degree of freedom. For $\g=\sln(n)$ the number of the Tyurin parameters at one of the points $\ga$ is equal to $n$, the number of the points is a product $(\rank\g)g=(n-1)g$. Thus the total number of Tyurin parameters modulo gauge freedom is equal to  $n(n-1)g+(n-1)g-(n^2-1)=(n^2-1)(g-1)=\dim\g\cdot(g-1)$, where $n(n-1)g$ stays for the total number of the parameters $\a$, $(n-1)g$ is the number of the points $\ga$, and $n^2-1=\dim\g$.

All together, we have for the dimension of the phase space (being denoted by $\L^\K$)
\[
   \dim\L^\K=2\dim\g\cdot(g-1)=2N
\]
where $N$ is the number of integrals.

In \refS{separ} below we will give another description of the phase space of hyperelliptic Hitchin systems based on the description of spectral curves (\refSS{dx/y}).
\subsection{Basis in the space of spectral invariants for hyperelliptic curves (the case $\varpi=dx/y$)}\label{SS:dx/y}
For classical Lie algebras the spectral curve is given by the equation of the form
\begin{equation}\label{E:urav}
   R(x,y,\l)=\l^n+\sum_{i=1}^l r_i(x,y)\l^{n-d_i}=0
\end{equation}
where $n$ is the dimension of the \emph{standard} (in other terminology \emph{vector}) representation of the Lie algebra $\g$, $l=\rank\g$, and $r_i$ ($i=1,\ldots,r$) are meromorphic functions on $\Sigma$. Here, $r_i(z)=\chi_i(L(z))$ where $\chi_i$ ($i=1,\ldots,r$) are basis invariant polynomials of $\g$, and $\deg \chi_i=d_i$. The series $D_l$ is exceptional with this respect: the summand of degree 0 in $\l$ in \refE{urav} is of the form $p_l(z)=\chi_l(L(z))^2$ (i.e. the square of the basis invariant; indeed the basis invariant is nothing but the Pfaffian in this case). Jumping ahead, we notice that this is the reason why the equations for Hamiltonians in the method of separation of variables become non-linear for the $D_l$ series.

The differential $\varpi=dx/y$ on the curve \refE{curve} is holomorphic and has a zero of the multiplicity $2(g-1)$ at $\infty$: $D=(\varpi)=2(g-1)\cdot\infty$. The basis spectral invariants of order $d_i$ run over the space $H^0(d_iD)$. We want to exhibit a base in this space.
\begin{proposition}\label{P:f_base}
The functions $1,x,\ldots,x^{d_i(g-1)}$, and $y,yx,\ldots,yx^{(d_i-1)(g-1)-2}$ form a base in the space $H^0(d_iD)$ for $D=2(g-1)\cdot\infty$.
\end{proposition}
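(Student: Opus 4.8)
The plan is to exploit that the divisor $d_iD=2d_i(g-1)\cdot\infty$ is supported at the single point $\infty$, so that $H^0(d_iD)$ is exactly the space of meromorphic functions on $\Sigma$ that are holomorphic on $\Sigma\setminus\{\infty\}$ and have a pole of order at most $2d_i(g-1)$ at $\infty$. First I would record the pole orders of the two basic functions. Because $\deg P_{2g+1}=2g+1$ is odd, there is a single point over $x=\infty$, and in a local parameter $t$ there one has $x=t^{-2}(1+O(t))$ and $y=t^{-(2g+1)}(1+O(t))$; hence $\ord_\infty x=-2$ and $\ord_\infty y=-(2g+1)$, while both $x$ and $y$ are holomorphic on the (smooth) affine curve $\Sigma\setminus\{\infty\}$.

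Next I would check that every function listed in the statement lies in $H^0(d_iD)$ by bounding its pole. The monomial $x^a$ has a pole of order $2a$, which is $\le 2d_i(g-1)$ precisely when $a\le d_i(g-1)$, yielding $1,x,\dots,x^{d_i(g-1)}$. The monomial $yx^b$ has a pole of order $2g+1+2b$, which is $\le 2d_i(g-1)$ precisely when $2b\le 2d_i(g-1)-(2g+1)$; since $b\in\Z$ and $2g+1$ is odd, this rounds down to $b\le (d_i-1)(g-1)-2$, yielding $y,yx,\dots,yx^{(d_i-1)(g-1)-2}$ (understood to be empty when this upper index is negative). Thus all the listed functions belong to $H^0(d_iD)$, with their only pole at $\infty$.

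The key observation is then that these functions have pairwise distinct orders at $\infty$: the powers $x^a$ realize the even pole orders $0,2,\dots,2d_i(g-1)$, whereas the $yx^b$ realize the odd pole orders $2g+1,2g+3,\dots$. Functions with distinct pole orders at a common point are automatically linearly independent, so the whole family is linearly independent and spanning need never be checked directly. It remains to count: the $x$-part contributes $d_i(g-1)+1$ functions and the $y$-part contributes $(d_i-1)(g-1)-1$ functions, for a total of $(2d_i-1)(g-1)$. By $3^\circ$ of the Proposition in \refSS{num_int} (Riemann--Roch, applicable because $d_i\ge 2$ for semisimple $\g$, so that $d_iD$ is non-special), this equals $\dim H^0(d_iD)$. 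A linearly independent family of cardinality equal to the dimension is a basis, which proves the Proposition.

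The only genuinely delicate step is the boundary arithmetic of the $y$-range: the parity of $2g+1$ is what pushes the bound down to $(d_i-1)(g-1)-2$, and one must confirm that the two counts still sum to $(2d_i-1)(g-1)$ in the degenerate cases (for instance $g=2$, $d_i=2$) in which the $y$-list is empty. Verifying this bookkeeping, rather than any deeper geometric input, is where the argument must be handled with care.
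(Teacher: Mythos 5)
Your proposal is correct and follows essentially the same route as the paper: bound the pole orders of $x^a$ and $yx^b$ at the single point $\infty$, use the parity of $2g+1$ to sharpen the bound on $b$ to $(d_i-1)(g-1)-2$, deduce linear independence from the distinct (even vs.\ odd) pole orders, and match the count $(2d_i-1)(g-1)$ against the Riemann--Roch dimension of $H^0(d_iD)$. Your explicit attention to the degenerate case where the $y$-list is empty is a small but welcome addition that the paper leaves implicit.
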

\begin{proof}
The function $x$ has a pole of order 2 at the infinity, hence $x^k\in H^0(d_iD)$ if, and only if $0\le 2k\le 2d_i(g-1)$, which implies $0\le k\le d_i(g-1)$. These functions are linearly independent, and their number is equal to $d_i(g-1)+1$.

The function $y$ has a pole of order $2g+1$ at the infinity, hence $yx^s\in H^0(d_iD)$ if, and only if $s\ge 0$ and $2g+1+2s\le 2d_i(g-1)$. The second inequality implies $2s\le 2d_i(g-1)-2g-1$. Because the left hand side of the inequality is even while the right hand side is odd, we obtain $2s\le 2d_i(g-1)-2g-2$. All together, $0\le s\le (d_i-1)(g-1)-2$. The number of such functions is equal to $0\le s\le (d_i-1)(g-1)-1$.

The total number of the above functions is equal to $(2d_i-1)(g-1)$ which coincides with the above computed dimension of the space $H^0(d_iD)$. They  are linearly independent for the reason their orders at the infinite point are different. In particular, the functions in the first set have even orders, while the orders of the functions in the second set are odd. All together, they form a base in the space $H^0(d_iD)$.
\end{proof}
The expansion of the coefficients of the spectral curve over the just found out base has the form
\begin{equation}\label{E:coeff_spc}
    r_i(x,y)=\sum_{k=0}^{d_i(g-1)} H^{(0)}_{ik} x^k +\sum_{s=0}^{(d_i-1)(g-1)-2} H^{(1)}_{is}yx^s
\end{equation}
where $H^{(0)}_{ik}$, $H^{(1)}_{is}$ are independent integrals of the Hitchin system.

\subsection{Example: the form of a spectral curve for $\g=\sln(2)$ and $\varpi=dx/y$}\label{SS:r2g2}
In this case, there is only one basis spectral invariant $r_2$ ($i=2$, $d_i=2$):
\begin{equation}\label{E:r2_sl2}
   r_2(x,y)=\sum_{k=0}^{2(g-1)} H^{(0)}_{k} x^k +\sum_{s=0}^{g-3} H^{(1)}_{s}yx^s
\end{equation}
(in particular, for $g=2$ the second sum is absent, see also \cite{PBor}). The spectral curve has the form
\begin{equation}\label{E:sp_c_sl2}
      \l^2+r_2(x,y)=0.
\end{equation}
\subsection{Basis of holomorphic differentials on a spectral curve}
\begin{proposition}\label{P:h_diff}
If the base and the spectral curves are non-degenerate and have only simple, mutually different branch points in the finite domain (as branch coverings of the $x$-line), then a base in the space of holomorphic differentials on the spectral curve is given by $\frac{x^k\l^{n-d_i}dx}{R_\l'(x,y,\l)\, y }$ $(0\le k\le d_i(g-1))$, and $\frac{x^s\l^{n-d_i}dx}{R_\l'(x,y,\l)}$ $(0\le k\le {(d_i-1)(g-1)-2})$, $i=1,\ldots,l$.
\end{proposition}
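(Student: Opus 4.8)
The plan is to prove the three things a basis must satisfy: holomorphy, linear independence, and a matching count. The starting point is to rewrite both families uniformly. Writing $\varpi=dx/y$ and using $dx=y\,\varpi$, the second family becomes $x^s\lambda^{n-d_i}dx/R'_\lambda=(yx^s)\lambda^{n-d_i}\varpi/R'_\lambda$, so every proposed differential has the shape
\[
\omega_{\phi,i}=\frac{\phi\,\lambda^{n-d_i}\,\varpi}{R'_\lambda(x,y,\lambda)},
\]
where $\phi$ ranges over the basis $1,x,\dots,x^{d_i(g-1)}$, $y,yx,\dots,yx^{(d_i-1)(g-1)-2}$ of $H^0(d_iD)$ from \refP{f_base}, and $i=1,\dots,l$. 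Thus the assertion is exactly that $\phi\mapsto\omega_{\phi,i}$ sends the chosen basis of $H^0(d_iD)$ to holomorphic differentials on the spectral curve $\hat\Sigma$.

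First I would settle holomorphy in the finite part. Away from the branch points of the covering $\pi\colon\hat\Sigma\to\Sigma$ one has $R'_\lambda\neq 0$, while $\pi^*\varpi$ and the numerator functions are holomorphic, so $\omega_{\phi,i}$ is regular there. At a branch point over a finite, non-Weierstrass point of $\Sigma$ (the non-degeneracy hypothesis keeping the branch locus disjoint from the Weierstrass points and simple), $\lambda_0$ is a simple double root of $R(x_0,y_0,\cdot)$; hence in a local parameter on $\hat\Sigma$ both $R'_\lambda$ and $\pi^*\varpi$ vanish to first order, their ratio is a local unit, and $\omega_{\phi,i}$ stays holomorphic. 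The finite Weierstrass points ($y=0$) are harmless: there $\varpi$ is holomorphic and nonvanishing while $dx$ has a first-order zero, so both families remain regular.

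The crux is holomorphy over $\infty$, and this is the main obstacle, since it is what forces the numerical bounds and where non-degeneracy is indispensable. Because each $r_i$ attains its maximal pole of order $2d_i(g-1)$ at $\infty$, the terms $\lambda^n$ and $r_i\lambda^{n-d_i}$ balance precisely when $\ord_\infty\lambda=-2(g-1)$; the $n$ points over $\infty$ then correspond to the (assumed simple, nonzero) roots $c_a$ of the leading polynomial $Q(c)=c^n+\sum_i h_ic^{n-d_i}$, $h_i$ being the top coefficient of $r_i$. The same balance gives $\ord R'_\lambda=-2(g-1)(n-1)$, with leading coefficient $Q'(c_a)\neq 0$. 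Combining this with $\ord_\infty\varpi=2(g-1)$ and $\ord\lambda^{n-d_i}=-2(g-1)(n-d_i)$, a one-line cancellation yields, at each point over $\infty$,
\[
\ord\,\omega_{\phi,i}=\ord_\infty\phi+2d_i(g-1).
\]
Hence $\omega_{\phi,i}$ is holomorphic there iff $\ord_\infty\phi\ge -2d_i(g-1)$, i.e. iff $(\phi)+d_iD\ge 0$ — exactly the defining condition of $H^0(d_iD)$, and therefore exactly the ranges $0\le k\le d_i(g-1)$ and $0\le s\le(d_i-1)(g-1)-2$ of \refP{f_base}.

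Finally, independence and counting. A relation $\sum_{i,\phi}c_{i\phi}\omega_{\phi,i}=0$, multiplied by $R'_\lambda/\varpi$, becomes $\sum_i g_i\lambda^{n-d_i}=0$ on $\hat\Sigma$ with $g_i=\sum_\phi c_{i\phi}\phi\in H^0(d_iD)$; since $1,\lambda,\dots,\lambda^{n-1}$ is a basis of $\C(\hat\Sigma)$ over $\C(\Sigma)$ and the $d_i$ are distinct with $n-d_i<n$, each $g_i=0$, whence every $c_{i\phi}=0$ by \refP{f_base}. Their number is $\sum_{i=1}^l(2d_i-1)(g-1)=\dim\g\cdot(g-1)=N$, the dimension of the relevant space of holomorphic differentials. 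Moreover the $\omega_{\phi,i}$ carry the factor $1/R'_\lambda$ and have degree $\le n-2$ in $\lambda$, so no nonzero combination is a pullback from $\Sigma$ (a pullback would, after the same multiplication, reproduce the top term $\lambda^{n-1}$ of $R'_\lambda$, absent here since $d_i\ge 2$); being $N=\hat g-g$ independent elements transverse to $\pi^*$ of the differentials of $\Sigma$, they exhaust that complement and hence form the asserted basis.
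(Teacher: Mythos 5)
Your proof is correct and follows essentially the same route as the paper: holomorphy at the finite (simple, mutually disjoint) branch points via the local computation with $R'_\lambda$ and $\varpi$, an order count at infinity that reduces the numerical bounds to the condition $(\phi)+d_iD\ge 0$ of \refP{f_base}, and linear independence inherited from that proposition. You are in fact somewhat more complete than the paper in the last step, separating the contributions of different $\lambda$-degrees via the basis $1,\lambda,\dots,\lambda^{n-1}$ of $\C(\hat\Sigma)$ over $\C(\Sigma)$ and observing that the $N=\hat g-g$ differentials span a complement to the pullbacks $\pi^*H^0(K_\Sigma)$ rather than all of $H^0(K_{\hat\Sigma})$ --- a point the paper leaves implicit.
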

\begin{proof}
It is a standard fact that for the non-degenerate plane curves the differential $\frac{dx}{R_\l'(x,y,\l)}$ is holomorphic at simple branch points in the finite domain. It is also true in our case. Indeed, in the neighborhood of such point $R'_\l=0$, $R'_x\ne 0$ hence $\l$ can be chosen as a local parameter, and $\frac{dx}{R_\l'(x,y,\l)}=\frac{x'_\l d\l}{R_\l'(x,y,\l)}$. By $R'_xx'_\l+R'_\l=0$ we obtain $\frac{dx}{R_\l'}=-\frac{d\l}{R_x'}$. Since the $dx/y$ is also holomorphic, and branch points of the two curves are mutually different, we obtain that $\frac{dx}{R_\l'\, y}$ is holomorphic too. The nominators $x^k\l^{n-d_i}$ ($x^s\l^{n-d_i}$, respectively) obviously have no pole in the finite domain.

At the infinity, we observe that $x\sim z^{-2}$, $y\sim z^{-2g-1}$ (which comes from the base curve), and $\l\sim z^{-2(g-1)}$ where $z$ is a local parameter (we use notation $y\sim z^{-2g-1}$ etc., in the same sense as $y=O(z^{-2g-1})$). The last comes because the highest terms in \refE{coeff_spc}, under this assumption, are of the same order $n$ (in $\l$) independently of $i$.

Obviously, for a given $i$, the differentials listed in the \refP{h_diff} are products of the functions listed in \refP{f_base} by the differential $\frac{\l^{n-d_i}dx}{R'_\l\, y}$. Observe that the functions in \refP{f_base} are found out from the condition equivalent to the requirement that their products by $\l^{-d_i}$ are holomorphic at $\infty$.  For example, $x^k\l^{-d_i}\sim z^{-2k}z^{2(g-1)}$. For $k\le d_i(g-1)$ (as stated in \refP{f_base}) $x^k\l^{-d_i}$ is holomorphic at $\infty$.

It remains only prove that the differential $\frac{\l^ndx}{R'_\l\, y}$ is holomorphic at $\infty$. Indeed, $R'_\l~\sim ~\l^{n-1}$, hence $\frac{\l^ndx}{R'_\l\, y}\sim\frac{\l dx}{y}\sim \frac{z^{-2(g-1)}z^{-3}dz}{z^{-(2g+1)}}\sim dz$.

The linear independence of the differentials in question descends to that of the functions in \refP{f_base}.
\end{proof}
\section{Separation of variables for hyperelliptic Hitchin systems}\label{S:separ}
\subsection{Phase space. Symplectic form and Poisson structure}
A spectral curve of the form \refE{coeff_spc} can be given by $(\dim\g)(g-1)$ points it passes through. Denote these points by $(x_i,y_i,\l_i)$ ($i=1,\ldots,(\dim\g)(g-1)$) where $y_i^2=P_{2g+1}(x_i)$, and all three are related by the equations \refE{urav}, \refE{coeff_spc}.

Define a two-form by
\begin{equation}\label{E:sympl_str}
   \w=\sum_{i=1}^{(\dim\g)(g-1)} d\l_i\wedge\frac{dx_i}{y_i}.
\end{equation}
According to \cite{Kr_Lax}, in particular to Theorem 4.3 therein, the symplectic structure given by $\w$ indeed is that of the Hitchin system.
\begin{remark}
To retrieve the Hitchin system from our data one has to consider the points $(x_i,y_i,\l_i)$ as poles of the eigenfunction of the Lax operator of Hitchin system, and use the technique of the inverse scattering method as described in \cite{Kr_Lax}. Indeed \refE{sympl_str} is nothing but 	adaptation of the Theorem 4.3 \cite{Kr_Lax} for the case of hyperelliptic (base) curve. In particular, the second wedge co-multiplier in \refE{sympl_str} is nothing but the differential $\frac{dx}{y}$ fixed in the definition of the Lax operator  (also in the definition of the symplectic structure in \cite{Kr_Lax}).
\end{remark}
The phase space of a Hitchin system on a hyperelliptic curve $\Sigma$ is formed by non-ordered sets $\{ (x_i,y_i,\l_i)|i=1,\ldots, (\dim\g)(g-1)\}$, with symplectic structure given by \refE{sympl_str}.

The corresponding Poisson structure is given by
\begin{equation}\label{E:poiss_str}
   \{ \l_i,x_j \}=\d_{ij}y_i.
\end{equation}
\subsection{Angle coordinates}
The subvariety of triples $\{ (x_i,y_i,\l_i)|i=1,\ldots, (\dim\g)(g-1)\}$ giving the same spectral curve $C$ is equal to $S^{(\dim\g)(g-1)}C$ (the symmetric power of $C$, a triple $(x_i,y_i,\l_i)$ belongs to the $i$th copy of $C$).  The Abel transformation maps it to the Jacobian of $C$ (we will denote it by $Jac(C)$). By \emph{angle coordinates} we mean the coordinates on the Jacobian:
\begin{equation}\label{E:angle_coord1}
\phi_{jk}^{(0)}=\sum_{i=1}^{(\dim\g)(g-1)} \int^{\ga_i}\frac{x^k\l^{n-d_j}dx}{R_\l'(x,y,\l)\, y },\ 0\le k\le d_j(g-1);
\end{equation}
\begin{equation}\label{E:angle_coord2}
\phi_{js}^{(1)}=\sum_{i=1}^{(\dim\g)(g-1)} \int^{\ga_i}\frac{x^s\l^{n-d_j}dx}{R_\l'(x,y,\l)},\ 0\le s\le {(d_j-1)(g-1)-2}
\end{equation}
for $j=1,\ldots,l$, where $\ga_i=(x_i,y_i,\l_i)$ (cf. \cite[Eq. (4.61)]{Kr_Lax}).

\subsection{Action coordinates (for $\g$ of the $A_l$, $B_l$, $C_l$ type)}\label{SS:act_coord}
For these series the Hamiltonians of the Hitchin system can be expressed via $(x_i,y_i,\l_i)$ as a solution to the system of linear equations $R(x_i,y_i,\l_i)=0$ where $R(x,y,\l,H)=0$ is the equation of the spectral curve given by \refE{urav}, \refE{coeff_spc} ($H=\{ H^{(0)}_{jk}, H^{(1)}_{js}\ |\ j=1\ldots,l;\ k=0,1,\ldots,(2d_j-1)(g-1),\ s=0,1,\ldots, (d_j-1)(g-1)-2 \}$). In particular, for $\g=\sln(2)$ the system of equations is of the form
\begin{equation}\label{E:sep_rel}
  \l_i^2+\sum_{k=0}^{2(g-1)} H^{(0)}_{k} x_i^k +\sum_{s=0}^{g-3} H^{(1)}_{s}y_ix_i^s=0,\ i=1,\ldots, 3(g-1).
\end{equation}
Thus $H^{(0)}_{k}=D^{(0)}_{k}/D$, $H^{(1)}_{s}=D^{(1)}_{s}/D$ where
\[
D=\begin{vmatrix}
     1 & \ldots & x_1^{2(g-1)}   & y_1    & \ldots & y_1x_1^{g-3}  \\
\vdots &        & \vdots         & \vdots &        & \vdots \\
    1 & \ldots & x_{3(g-1)}^{2(g-1)}   & y_{3(g-1)} & \ldots & y_{3(g-1)}x_{{3(g-1)}}^{g-3}
             \end{vmatrix},
\]
$D^{(0)}_{k}$ is obtained by replacing $x_i^k$ by $(-\l_i^2)$ in the $k$th column in $D$, $D^{(1)}_{s}$ is obtained by replacing $y_ix_i^k$ by $(-\l_i^2)$ in the $(2g-1+k)$th column in $D$.

In the next section, we prove that the $H$- and $\phi$-coordinates introduced in the last two sections indeed are Darboux coordinates for the symplectic structure \refE{sympl_str}.

We are not in position to claim the same for the series $D_l$ because in this case the system of equations on the Hamiltonians is quadratic.
\subsection{Darboux property}
The Darboux property for the coordinates $(H^{(0)}_{jk}, \phi^{(0)}_{jk})$ ($k=0,1,\ldots,(2d_j-1)(g-1)$), $(H^{(1)}_{js}, \phi^{(1)}_{js})$ ($s=0,1,\ldots, (d_j-1)(g-1)-2$), $j=1\ldots,l$ is an immediate corollary of the following lemma.
\begin{lemma}[\cite{DT}]\label{L:LDT}
Consider a curve given by the equation
\[
    R(\l,\mu)=R_0(\l,\mu)+\sum_{j=1}^n H_jR_j(\l,\mu)=0,
\]
and the space of sets $(\l_1,\mu_1,\ldots,\l_n,\mu_n)$ with the Poisson bracket $\{\l_i,\mu_j\}=\d_{ij}f(\l_i,\mu_i)$ where $f$ is a smooth function.
Let $H_1,\ldots,H_n$ be the solution to the linear system of equations $R(\l_i,\mu_i)=0$, and $\phi_1,\ldots,\phi_n$ are given by
\begin{equation}\label{E:TD_angles}
    \phi_j=\sum_{k=1}^n\int^{\ga_k} \frac{R_j(\l,\mu)d\mu}{\partial_\l R(\l,\mu)f(\l,\mu)}
\end{equation}
where $\ga_k=(\l_k,\mu_k)$. Then $\{ H_i,\phi_j\}=\d_{ij}$.
\end{lemma}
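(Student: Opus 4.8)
The plan is to establish the Poisson bracket relation $\{H_i,\phi_j\}=\delta_{ij}$ by computing both the differentials of the $H_i$ with respect to the coordinates $(\l_k,\mu_k)$ and the Poisson bracket directly, exploiting the linear-in-$H$ structure of the defining equation $R(\l,\mu)=R_0+\sum_j H_jR_j=0$. The starting point is to regard the $H_i$ as implicit functions of the $2n$ variables $(\l_k,\mu_k)$ determined by the $n$ relations $R(\l_k,\mu_k)=0$. Differentiating each relation with respect to a coordinate, say $\l_m$, and using that $R(\l_k,\mu_k)=0$ holds identically, yields the system
\begin{equation}\label{E:diff_system}
   \partial_\l R(\l_k,\mu_k)\,\delta_{km} + \sum_{j=1}^n \frac{\partial H_j}{\partial \l_m}R_j(\l_k,\mu_k)=0,
\end{equation}
and the analogous system with $\mu_m$ in place of $\l_m$ (with $\partial_\mu R$ replacing $\partial_\l R$). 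Writing $M$ for the matrix with entries $M_{kj}=R_j(\l_k,\mu_k)$, these say that the Jacobian matrices $(\partial H_j/\partial\l_m)$ and $(\partial H_j/\partial\mu_m)$ are expressed through $M^{-1}$ against diagonal weights $\partial_\l R(\l_k,\mu_k)$ and $\partial_\mu R(\l_k,\mu_k)$ respectively.

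Next I would unwind the definition of $\phi_j$ in \refE{TD_angles}. Since $\phi_j$ is a sum of integrals $\int^{\ga_k}\frac{R_j\,d\mu}{\partial_\l R\cdot f}$ along the curve, each integral depends on $(\l_k,\mu_k)$ only through the upper limit, while the integrand is evaluated on the fixed-$H$ curve so that $\l$ and $\mu$ are not independent along the path of integration. The essential point is to parametrize the $k$th copy of the curve by $\mu$ (locally) and differentiate $\phi_j$ with respect to the endpoint: $\partial\phi_j/\partial\mu_k = \frac{R_j(\l_k,\mu_k)}{\partial_\l R(\l_k,\mu_k)\,f(\l_k,\mu_k)}$, while the $\l$-dependence enters only through the constraint. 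With the Poisson bracket $\{\l_i,\mu_j\}=\delta_{ij}f(\l_i,\mu_i)$, I would then assemble
\begin{equation}\label{E:bracket_assemble}
   \{H_i,\phi_j\}=\sum_{k=1}^n\left(\frac{\partial H_i}{\partial\l_k}\frac{\partial\phi_j}{\partial\mu_k}-\frac{\partial H_i}{\partial\mu_k}\frac{\partial\phi_j}{\partial\l_k}\right)f(\l_k,\mu_k),
\end{equation}
substitute the expressions obtained above, and watch the weights $f(\l_k,\mu_k)$ and $\partial_\l R(\l_k,\mu_k)$ cancel, leaving a sum of the form $\sum_k (M^{-1})_{ik}M_{kj}=\delta_{ij}$.

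The main obstacle I anticipate is the careful treatment of $\phi_j$ as a function on the full $2n$-dimensional phase space rather than on a single fixed-spectral-curve fiber. Along the integration path the variables $\l$ and $\mu$ are tied by $R=0$, but when I differentiate $\phi_j$ with respect to $\l_k$ and $\mu_k$ as independent phase-space coordinates, the Hamiltonians $H$ themselves vary, so the integrand is not simply evaluated at a point — the dependence of the path and of $R$ on the $H_\ell(\l,\mu)$ must be accounted for. The clean way to handle this is to show that the $H$-dependent correction terms, when inserted into the antisymmetrized combination \refE{bracket_assemble}, cancel by the symmetry of second derivatives, so that effectively only the endpoint contribution survives; this is where the antisymmetry of the Poisson bracket does the decisive work. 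Once that cancellation is verified, the remaining computation is the purely algebraic contraction of $M^{-1}$ with $M$, which is routine.
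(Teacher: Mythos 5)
The paper gives no proof of this lemma --- it is quoted from \cite{DT} and used as a black box --- so there is no internal argument to compare against; I evaluate your proposal on its own. Your strategy is the standard separation-of-variables computation and is essentially the right one: implicit differentiation of the separated relations $R(\l_k,\mu_k)=0$ yields $\partial H_j/\partial\l_m=-(M^{-1})_{jm}\,\partial_\l R(\l_m,\mu_m)$ and $\partial H_j/\partial\mu_m=-(M^{-1})_{jm}\,\partial_\mu R(\l_m,\mu_m)$ with $M_{kj}=R_j(\l_k,\mu_k)$; the endpoint derivative of $\phi_j$ in $\mu_k$ is $M_{kj}/\bigl(\partial_\l R(\l_k,\mu_k)\,f(\l_k,\mu_k)\bigr)$; and the contraction in your bracket formula gives $\pm\sum_k(M^{-1})_{ik}M_{kj}=\pm\delta_{ij}$, the overall sign being a matter of orientation and bracket conventions.

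The one step that would not go through as written is the cancellation of the correction terms coming from the dependence of the integration curve on $H$. Writing $\phi_j=\Phi_j(\mu_1,\ldots,\mu_n;H_1,\ldots,H_n)$, the chain rule gives $\partial\phi_j/\partial\mu_k=\partial_{\mu_k}\Phi_j+\sum_l(\partial\Phi_j/\partial H_l)(\partial H_l/\partial\mu_k)$ and $\partial\phi_j/\partial\l_k=\sum_l(\partial\Phi_j/\partial H_l)(\partial H_l/\partial\l_k)$; inserting these into the antisymmetrized bracket, the correction terms assemble into $\sum_l(\partial\Phi_j/\partial H_l)\,\{H_i,H_l\}$. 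They do not cancel ``by the symmetry of second derivatives''; they vanish because the Hamiltonians are in involution. That involutivity is itself a short but non-tautological consequence of the separated structure: both $\partial H_i/\partial\l_k$ and $\partial H_i/\partial\mu_k$ are proportional to the same factor $(M^{-1})_{ik}$ with scalar coefficients depending only on $k$, so each $k$-term of $\{H_i,H_l\}$ vanishes individually. This is precisely the content of \refP{comm} (quoted from \cite{Aint_sys}). Once you supply that step, your computation closes.
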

Indeed, as it was observed in course of the proof of \refP{h_diff}, in our case the integrands in \refE{TD_angles} are exactly the same as the basis differential listed in \refP{h_diff}, with $f=y$.  Also $H_j$ in the \refL{LDT}, and in \refSS{act_coord} are obtained in the same way (going back to \cite{BT2}).

\section{Bases in the spaces of spectral invariants and Krichever--Novikov functions}\label{S:KN}

It is not possible to exhibit a basis like  that given in \refP{f_base} for a generic choice of the holomorphic differential $\varpi$. It is nevertheless possible to give a base in another form using Krichever--Novikov basis functions. Here we will illustrate this in the case $\varpi=x^{g-1}dx/y$, inspite in this case the base in terms of monomials in $x$, $y$ still does exist: for $g=2$, $\g=\sln(2)$ it is exhibited by D.Talalaev\footnote{Personal communication} (the exhibited base turned out to consist of inverse ones to the functions of the base exhibited in \refSS{r2g2}, and in \cite{PBor} for $g=2$, $\g=\sln(2)$, and $\varpi=dx/y$).

\subsection{A base in the space of the spectral invariants for $\varpi=x^{g-1}dx/y$}

We set $\varpi=x^{g-1}dx/y$ below (this is the only differential of the form $x^kdx/y$, $0\le k\le g-1$ which has no zero at $x=y=\infty$). The differential $\varpi$ is holomorphic and has two zeroes $0_1$ and $0_2$ of multiplicity $g-1$ over $x=0$ (we assume that $x=0$ is not a branching point of $\Sigma$). Thus, $D=(g-1)0_1+(g-1)0_2$.

The basis spectral invariants of an order $d_i$ run over the space $H^0(d_iD)$. The dimension of this space is equal $(2d_i-1)(g-1)$, as computed above. We will look for a basis in this space among the basis Krichever--Novikov functions corresponding to two incoming points $0_1$, $0_2$, and one outgoing point $Q$ to be taken arbitrarily in a generic position. Such functions are enumerated by the pairs $(n,r)$, $n\in\Z$, $r=1,2$, are denoted by $A_{n,1}$, $A_{n,2}$, and are given by the following relations \cite[relation (1.3.3)]{Sh_DGr}:
\begin{align*}
  & \ord_{O_1}A_{n,1}=\ord_{O_2}A_{n,2}=n,      \\
  &  \ord_{O_1}A_{n,2}=\ord_{O_2}A_{n,1}=n+1,     \\
  &  \ord_Q A_{n,1}=\ord_Q A_{n,2}=-2(n+1)-(g-1).
\end{align*}
For a generic curve the points $0_1$ and $0_2$ are not Weierstra\ss\ points, and a Krichever--Novikov-type base exists. Moreover, its elements have known expressions in terms of $\theta$-functions \cite{Schlich_DGr}. In particular, the following statement holds.
\begin{theorem}[\cite{Schlich_DGr}]
If $m>0$ or $m<-\left[ \frac{g-2}{2}\right]-2$, $r=1,2$ then the functions $A_{m,r}$ are uniquely defined.
\end{theorem}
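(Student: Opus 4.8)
The plan is to turn the three order prescriptions into a single Riemann--Roch problem and to read off existence and uniqueness from the non-speciality of an explicit divisor of degree $g$. I treat $r=1$; the case $r=2$ then follows by applying the hyperelliptic involution $\sigma$, which interchanges $O_1$ and $O_2$. The defining relations say exactly that $A_{m,1}$ is a section of the line bundle attached to
\[
  E_m=\big(2(m+1)+(g-1)\big)\,Q-m\,O_1-(m+1)\,O_2 ,
\]
and since the prescribed orders sum to $m+(m+1)-\big(2(m+1)+(g-1)\big)=-g$, one has $\deg E_m=g$ for every $m$. A section therefore always exists; what must be shown is that it is unique up to scale and realises the listed orders \emph{exactly}, i.e. does not over-vanish at $O_1,O_2$ nor acquire a smaller pole at $Q$. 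The clean way to package this is to require that none of the three degree-$(g-1)$ divisors $E_m-O_1,\ E_m-O_2,\ E_m-Q$ be effective: because $h^0$ drops by at most one when a point is removed, these three vanishings already force $h^0(E_m)=1$ and pin the orders at their prescribed values. So the theorem is equivalent to the non-effectivity of these three classes for $m>0$ and for $m<-\left[\frac{g-2}{2}\right]-2$.

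For $m>0$ this is a direct count. A section of $E_m$ is a function with a pole of order at most $2m+g+1$ at the generic point $Q$ and zeros of orders $\ge m,\,m+1$ at $O_1,O_2$; since $Q$ is a non-Weierstra\ss\ point, $h^0\big((2m+g+1)Q\big)=2m+2$, and imposing the $2m+1$ independent conditions at $O_1,O_2$ leaves a one-dimensional space whose generator attains the exact orders. Removing any one of the three points drops the dimension to zero, so the neighbours are non-effective and $A_{m,1}$ is well defined.

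The negative range is the substantive one, and here I would use the hyperelliptic geometry. Writing $\K\sim(2g-2)\infty\sim(g-1)(O_1+O_2)$ (the divisor of $\varpi$, with $O_1+O_2\sim 2\infty$ the fibre of $x$ over $0$) reduces each effectivity question to an explicit one on $\Sigma$. The controlling structural fact is that a function regular off $\{O_1,O_2\}$ and at the Weierstra\ss\ point $\infty$ splits as $R(1/x)+y\,T(1/x)$: the even part $R(1/x)$ is pulled back from the $x$-line, so it has equal pole orders at $O_1$ and $O_2$ and vanishes to equal order at $Q$ and $\sigma Q$, while any odd part $y\,T(1/x)$ is forced to have pole order at least $g+1$ at $O_1,O_2$, because $y$ has a pole of order $2g+1$ at $\infty$. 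The asymmetry $m$ versus $m+1$ required at $O_1,O_2$, together with the vanishing imposed at the generic (hence non-$\sigma$-symmetric) point $Q$, can be accommodated inside the available pole budget only once the odd part becomes admissible; the arithmetic of this admissibility threshold, measured against the Weierstra\ss\ gap sequence at $\infty$, is what produces the numerical bound $-\left[\frac{g-2}{2}\right]-2$.

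The step I expect to be hardest is the sharp delineation of the critical strip: proving that below the bound the odd contributions restore one-dimensionality and the exact orders, whereas in the strip $-\left[\frac{g-2}{2}\right]-2\le m\le 0$ the pullback part dominates, the $\sigma$-symmetry couples the vanishing at $Q$ to that at $\sigma Q$, and one of the neighbour classes $E_m-O_1,\ E_m-O_2,\ E_m-Q$ becomes effective. Verifying that the transition occurs at exactly the claimed value — matching the Riemann--Roch count against the Weierstra\ss\ behaviour of $y$ at $\infty$ and the generic position of $Q$ — is where the argument is most delicate, and is the only place the precise constant, rather than a crude degree bound, is really used.
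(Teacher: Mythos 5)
The paper supplies no proof of this statement---it is imported verbatim from \cite{Schlich_DGr}---so there is nothing in the text to compare your argument against; I can only judge it on its own terms. Your framework is the right one and is essentially the standard route in the Krichever--Novikov literature: encode the order prescriptions in the degree-$g$ divisor $E_m$, get existence of a section from Riemann--Roch, and reduce ``unique with exact orders'' to the vanishing of $h^0(E_m-O_1)$, $h^0(E_m-O_2)$, $h^0(E_m-Q)$ for the three degree-$(g-1)$ neighbours. That reduction is correct.

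The execution, however, has two genuine gaps. First, the $m>0$ case is circular as written: you assert that the $2m+1$ vanishing conditions imposed on the $(2m+2)$-dimensional space $H^0\bigl((2m+g+1)Q\bigr)$ are independent and that ``removing any one of the three points drops the dimension to zero,'' but the independence of those conditions \emph{is} the statement $h^0(E_m-P)=0$ that you are trying to prove; a count of $2m+2$ conditions against a $(2m+2)$-dimensional space gives nothing. To close this you need an actual input---either genericity of $Q$ (the class $E_m-P$ moves with $Q$, so it generically misses the theta divisor), or a Weierstra\ss-gap argument (after Serre duality the effective part of the relevant class is concentrated at the single non-Weierstra\ss\ point $Q$ with multiplicity at most $g$, forcing $h^0=0$); the latter is what actually produces the sharp bounds in the quoted theorem. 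Second, the negative range---which you yourself identify as the substantive part, and which is the only place the constant $-\left[\frac{g-2}{2}\right]-2$ enters---is a plan, not a proof: the even/odd decomposition under the hyperelliptic involution is a plausible device (and note it only applies below the strip, where $2m+g+1<0$ so the function really is regular off $\{O_1,O_2\}$), but the threshold computation is never carried out. Moreover the mechanism of the critical strip is not the ``admissibility of the odd part'' you describe: for $-\left[\frac{g-2}{2}\right]-2\le m\le 0$ one of the neighbour classes $E_m-P$, or its Serre dual, visibly dominates an effective divisor (e.g.\ for $g=2$, $m=-1$ one has $E_{-1}-O_1=Q$, which already contains the constants), so no genericity can rescue it, whereas outside the strip the effective part sits at a single generic point with multiplicity at most $g$. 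Finally, be aware that the cited theorem is a general-position statement, while your argument is tied to the paper's special configuration $O_2=\sigma O_1$ with $O_1+O_2$ a fibre of the hyperelliptic pencil; that specialization makes the divisors involved special, so the hyperelliptic computation you sketch would genuinely have to be done rather than inherited.
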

$A_{m,r}\in H^0(d_iD)$ if, and only if $\ord_Q A_{m,r}\ge 0$ ($r=1,2$). Hence $-2(n+1)-(g-1)\ge 0$ and
\begin{equation}\label{E:nrange1}
     -d_i(g-1)\le n\le -\frac{g+1}{2}
\end{equation}
If $g$ is odd then the upper bound is integer here, and the total number of functions satisfying the restriction \refE{nrange1}, is equal to $2\left(  -\frac{g+1}{2}+ d_i(g-1)+1  \right)=(2d_i-1)(g-1)$  (which coincides with the dimension $H^0(d_iD)$). If $g$ is even then one function is missing, but it is compensated by adding the identical unit to the set of the basis functions.

Thus a general form of the spectral curve of a hyperelliptic genus $g$ Hitchin system with a semisimple Lie algebra $\g$ is as follows:
\begin{equation}\label{E:spec_gen}
   R(\l,P)= \l^n+\sum_{j=1}^l \l^{n-d_j}r_j(P)    ,
\end{equation}
where
\[
   r_j=\sum_{k=0}^{(2d_j-1)(g-1)-1} H_{jk}A_k ,
\]
$H_{ij}$ are the integrals of the system.
\begin{example}
Let $\varpi=x^{g-1}dx/y$ and $\g=\sln(2)$.
For the second order spectral invariants $(r_2)\ge -2D$, hence $r_2\in H^0(2D)$. As it was computed above, $\dim H^0(2D)=3(g-1)$.  For the above functions $A_{n,r}$ ($r=1,2$), $A_{n,r}\in H^0(2D)$ if, and only if $\ord_Q A_{n,r}\ge 0$, that is $-2(n+1)-(g-1)\ge 0$, and
\begin{equation}\label{E:nrange}
     -2(g-1)\le n\le -\frac{g+1}{2}
\end{equation}
The number of the basis elements satisfying this condition is equal to $2(-\frac{g+1}{2}+2(g-1)+1)=3(g-1)$ which coincides with the dimension of the space $H^0(2D)$. We introduce the following enumeration of those elements: $A_{n,r}=A_j$ where $j=2(n+2(g-1))+r$. Then $j=1,\ldots, 3(g-1)$. The elements $A_j$, $j=1,\ldots, 3(g-1)$ form a base in $H^0(2D)$. If $g$ is odd we obtain one basis element less that it is required. The missing element can be fixed using the Krichever--Noikov duality between functions and 1-forms \cite{Schlich_DGr, Sh_DGr}.
\end{example}
\subsection{Darboux coordinates}
The Hamiltonians can be expressed from the system of linear equations
\begin{equation}\label{E:spec_KN}
    R(\ldots,H_{jk},\ldots\l_i,P_i)=0,\ i=1,\ldots,N
\end{equation}
by the Kramer's rule, where $R$ is defined by the relation \refE{spec_gen}, $j=1,\ldots,n$, $k=0,\ldots,N$, $N=(\dim\g)(g-1)-1$, $(P_i, l_i)$ ($i=1,\ldots,N$) are points of the spectral curve.
To point out the conjugated coordinates, we again make use of the relation \refE{TD_angles}. In the present case the curve is of the form \refE{spec_gen}, hence introducing the local coordinate from the relation $dz=\varpi$ we obtain $R_{jk}(\l,z)=\l^{n-d_j}H_{jk}A_k(z)$,
\[
   \partial_\l R(\l,P)= n\l^{n-1}+\sum_{j=1}^l (n-d_j)\l^{n-d_j-1}\sum_{k=0}^{(2d_j-1)(g-1)-1} H_{jk}A_k(z)    ,
\]
and finally we obtain
\[
   \phi_{jk}=\sum_k\int^{z_{jk}}\frac{\l^{n-d_j}H_{jk}A_k(z)}{n\l^{n-1}+\sum_{j=1}^l (n-d_j)\l^{n-d_j-1}\sum_{k=0}^{(2d_j-1)(g-1)-1} H_{jk}A_k(z)}dz.
\]
Since we have not check any holomorphicity of the integrands, we do not claim that the coordinates $H_{jk},\phi_{jk}$ indeed are the action--angle coordinates in this case (though we would expect that).
\begin{example}
The system \refE{spec_KN} is especially simple for $\g=\sln(2)$, it is of the form
\[
   \l_i^2-\sum_{k=1}^NA_{k-1}(P_i)H_k=0,\quad i=1,\ldots,N
\]
where $N=3(g-1)$. From that, we have
\[
   H_j=\frac{\begin{vmatrix}
    A_0(P_1) & \ldots & A_{j-1}(P_1)  & \l_1^2 & A_{j+1}(P_1) & \ldots & A_{N-1}(P_1)  \\
          \vdots &        & \vdots & \vdots & \vdots &        & \vdots \\
    A_0(P_N) & \ldots & A_{j-1}(P_N)  & \l_N^2 & A_{j+1}(P_N) & \ldots & A_{N-1}(P_N) \\
             \end{vmatrix}
   }{\det(A_{k-1}(P_i))_{i,k=1,\ldots,N}}.
\]
Also, $\partial_\l R(\l,P)=2\l$, and $R_k(\l,z)=A_k(z)$, hence
\[
   \phi_{k}=\frac{1}{2}\sum_k\int^{z_{k}}\frac{A_k(z)}{\l}dz.
\]
\end{example}
\section{Concluding remarks}\label{S:concluding}
What is said in \refS{descr} and \refS{separ}, enables one to define a Hitchin system on a genus $g$ hyperelliptic curve, with a Lie algebra $\g$, as a system with $(\dim\g)(g-1)$ degrees of freedom, phase space formed by the triples $(x_i,y_i,\l_i)$ satisfying the relations \refE{curve}, \refE{urav}, \refE{coeff_spc} ($i=1,\ldots,(\dim\g)(g-1)$), having a symplectic (Poisson) structure as defined in \refE{sympl_str} (resp., in \refE{poiss_str}), whose Hamiltonians are defined from the system of equations
\[
   R(x_i,y_i,\l_i,H) =0,\quad i=1,\ldots,(\dim\g)(g-1)
\]
where $R(x,y,\l,H)$ is defined by \refE{urav} and \refE{coeff_spc}, $H$ is a set of independent Hamiltonians. Integrability of such a system can be proved independently of any Lax representation, Hamiltonian reduction, and other methods conventionally used for this purpose. It follows from the following elementary statement.
\begin{proposition}[\cite{Aint_sys}]\label{P:comm}
Let $H_j=H_j(z_1,\ldots,z_n,\l_1,\ldots,\l_n)$ ($j=1,\ldots,n$) be functions defined as a solution to the system of equations
\begin{equation}\label{E:non_sys}
F_i(H_1,\ldots,H_n,z_i,\l_i)=0,\quad i=1,\ldots,n
\end{equation}
where $F_i$ are given smooth function of complex or real  variables. Then (under certain natural genericity requirements) $H_1,\ldots,H_n$ commute with respect to any Poisson bracket of the form
$
\{ f,g\}=\sum\limits_{j=1}^np_j\left(\frac{\partial f}{\partial z_j}\frac{\partial g}{\partial \l_j}-\frac{\partial g}{\partial z_j}\frac{\partial f}{\partial \l_j}\right)
$,
where $p_j=p_j(z_j,\l_j)$ are smooth functions in only one pair of variables
\emph{(in our case $x_i$ plays the role of $z_i$, for all $i$)}.
\end{proposition}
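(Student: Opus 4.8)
The plan is to show directly that $\{H_a,H_b\}=0$ for every pair of indices $a,b$, by computing the first partial derivatives of the implicitly defined functions $H_m$ and observing that a factorization forced by the special shape of the system \refE{non_sys} makes the bracket vanish term by term. To make precise the ``natural genericity requirements'', I would demand that the $n\times n$ matrix $B=\left(\frac{\partial F_i}{\partial H_m}\right)_{i,m=1}^{n}$ be invertible at the point under consideration. By the implicit function theorem this guarantees that \refE{non_sys} determines the $H_m$ as smooth functions of all the variables $z_1,\ldots,z_n,\l_1,\ldots,\l_n$, and that these functions may be differentiated.

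Next I would differentiate the $i$-th equation $F_i(H_1,\ldots,H_n,z_i,\l_i)=0$ with respect to $z_k$ and with respect to $\l_k$. The decisive structural feature is that $F_i$ contains the pair $(z_i,\l_i)$ only for the matching index $i$, depending on all the remaining variables exclusively through the $H_m$. Hence the chain rule yields, for all $i,k$,
\[
\sum_{m=1}^{n}B_{im}\frac{\partial H_m}{\partial z_k}=-\d_{ik}\frac{\partial F_i}{\partial z_i},
\qquad
\sum_{m=1}^{n}B_{im}\frac{\partial H_m}{\partial \l_k}=-\d_{ik}\frac{\partial F_i}{\partial \l_i}.
\]
Both right-hand sides are diagonal in $(i,k)$, so inverting $B$ gives the explicit formulas
\[
\frac{\partial H_m}{\partial z_k}=-(B^{-1})_{mk}\,\frac{\partial F_k}{\partial z_k},
\qquad
\frac{\partial H_m}{\partial \l_k}=-(B^{-1})_{mk}\,\frac{\partial F_k}{\partial \l_k}.
\]
The key point is that the $z$- and $\l$-derivatives of a given $H_m$ both carry the \emph{same} matrix factor $(B^{-1})_{mk}$.

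Finally I would substitute these into the bracket. For each summation index $j$,
\[
\frac{\partial H_a}{\partial z_j}\frac{\partial H_b}{\partial \l_j}-\frac{\partial H_b}{\partial z_j}\frac{\partial H_a}{\partial \l_j}
=\left[(B^{-1})_{aj}(B^{-1})_{bj}-(B^{-1})_{bj}(B^{-1})_{aj}\right]\frac{\partial F_j}{\partial z_j}\frac{\partial F_j}{\partial \l_j}=0,
\]
so that $\{H_a,H_b\}=\sum_{j=1}^{n}p_j\cdot 0=0$ for arbitrary weights $p_j(z_j,\l_j)$. This is exactly the commutativity asserted, and note that it holds term by term, independently of the $p_j$.

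The proof is thus almost entirely formal; the one place requiring care---and the real content of the genericity hypothesis---is the invertibility of $B$, which is needed both to invoke the implicit function theorem and to solve the differentiated linear system. Away from the locus $\det B=0$ the cancellation above is completely elementary and uses no property of the $F_i$ beyond their dependence on a single coordinate pair each. I therefore expect the only genuine obstacle to be identifying and justifying the open dense set on which $B$ is invertible, rather than the bracket computation itself.
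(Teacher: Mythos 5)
Your argument is correct: the diagonal structure of the differentiated system forces $\partial H_m/\partial z_k$ and $\partial H_m/\partial \l_k$ to share the common factor $(B^{-1})_{mk}$, and the bracket then vanishes term by term for any weights $p_j$, with invertibility of $B=(\partial F_i/\partial H_m)$ being exactly the ``natural genericity requirement.'' The paper itself gives no proof of this proposition --- it is quoted from the cited reference \cite{Aint_sys} --- but your computation is the standard argument for this separation-of-variables commutativity lemma and is surely the intended one.
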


Another remark is as follows. A choice of the holomorphic differential $\varpi$ on $\Sigma$ in the definition of the Lax operator (see \refSS{Lax} for details) is a kind of gauge freedom. The results of \refS{descr}, \refS{separ} correspond to the choice $\varpi=dx/y$. This is a very special choice in sense that the divisor of $\varpi$ is supported at one point, namely at $\infty$, and this is a Weierstra\ss\ point. It is not obvious that a similar technique would turn out to be successfull for a generic choice of $\varpi$. However, there is another (complementary) option in this case, namely, to express the coefficients of spectral curves via certain Krichever--Novikov basis functions (\refS{KN}). It was a basic approach in the early version of this work. However, in the present version we focus on the first approach due to discussion with D.Talalaev and S.P.Novikov at the S.P.Novikov's seminar.

\bibliographystyle{amsalpha}

\end{document}